\documentclass[preprint,a4paper,11pt,sort]{elsarticle}
\usepackage{fullpage}

\usepackage{amsmath,amssymb,latexsym}
\usepackage{amsthm} 
\usepackage{thm-restate}
\usepackage{tcolorbox}

\usepackage{xcolor}
\definecolor{darkred}{rgb}{0.75,0,0}

\usepackage{graphicx}

\usepackage{thmtools} 

\usepackage[colorlinks=true]{hyperref}

\usepackage[capitalize,nameinlink]{cleveref}
\newtheorem{theorem}{Theorem}[section]
\newtheorem{lemma}[theorem]{Lemma}
\crefname{lemma}{Lemma}{Lemmas}
\newtheorem{claim}{Claim}[theorem]
\crefname{claim}{Claim}{Claims}

\crefname{corollary}{Corollary}{Corollaries}

\crefname{proposition}{Proposition}{Propositions}

\theoremstyle{definition}
\newtheorem{observation}[theorem]{Observation}
\crefname{observation}{Observation}{Observations}

\newenvironment{subproof}[1][\proofname]{%
  \begin{proof}[#1]%
}{%
  \end{proof}%
}

\newcommand{\BBR}{\textsc{Balanced Biclique Reconfiguration}}
\newcommand{\CCR}{\textsc{Connected Components Reconfiguration}}
\newcommand{\ISR}{\textsc{Independent Set Reconfiguration}}
\newcommand{\CLR}{\textsc{Clique Reconfiguration}}

\newcommand{\cc}{\mathsf{cc}}
\newcommand{\mset}[1]{\{\mskip-5mu\{ #1 \}\mskip-5mu\}}

\usepackage{mathtools}

\usepackage[mathlines]{lineno}
\newcommand*\patchAmsMathEnvironmentForLineno[1]{
  \expandafter\let\csname old#1\expandafter\endcsname\csname #1\endcsname
  \expandafter\let\csname oldend#1\expandafter\endcsname\csname end#1\endcsname
  \renewenvironment{#1}
     {\linenomath\csname old#1\endcsname}
     {\csname oldend#1\endcsname\endlinenomath}}
\newcommand*\patchBothAmsMathEnvironmentsForLineno[1]{
  \patchAmsMathEnvironmentForLineno{#1}
  \patchAmsMathEnvironmentForLineno{#1*}}
\AtBeginDocument{
\patchBothAmsMathEnvironmentsForLineno{equation}
\patchBothAmsMathEnvironmentsForLineno{align}
\patchBothAmsMathEnvironmentsForLineno{flalign}
\patchBothAmsMathEnvironmentsForLineno{alignat}
\patchBothAmsMathEnvironmentsForLineno{gather}
\patchBothAmsMathEnvironmentsForLineno{multline}
}

\begin{document}

\title{Biclique Reconfiguration in Bipartite Graphs\tnoteref{t1}}
\tnotetext[t1]{%
  Partially supported by JSPS KAKENHI Grant Numbers
  JP22H00513, 
  JP24H00697, 
  JP25K03076, 
  JP25K03077. 
} 

\author[nagoya-u]{Yota Otachi}
\ead{otachi@nagoya-u.jp}

\author[nagoya-u]{Emi Toyoda}
\ead{toyoda.emi@nagoya-u.jp}

\affiliation[nagoya-u]{
    organization={Nagoya University},
    city={Nagoya},
    country={Japan}
}

\begin{abstract}
We prove that {\BBR} on bipartite graphs is PSPACE-complete.
This implies the PSPACE-completeness of the spanning variant of \textsc{Subgraph Reconfiguration} under the token jumping rule
for the property ``a graph is an $(i, j)$-complete bipartite graph,'' which was previously known only to be NP-hard [Hanaka et al.~TCS 2020].
Using our result, we also show that {\CCR} with two connected components is PSPACE-complete under all previously studied rules,
resolving an open problem of Nakahata~[COCOON 2025] in the negative.
\end{abstract}

\begin{keyword}
combinatorial reconfiguration, biclique, token jumping, connected components reconfiguration.
\end{keyword}

\maketitle


\section{Introduction}

In the framework of combinatorial reconfiguration on graphs,
we are given two substructures of a graph (e.g., vertex subsets) with a certain property
and are asked whether one can transform one into the other via step-by-step modifications, while preserving the property in the intermediate states.
In the literature, reconfiguration problems of several important structures have been studied extensively;
e.g., independent sets, vertex covers, cliques, dominating sets, matchings, and so on.
See the surveys~\cite{Heuvel13,Nishimura18} and the references therein.

In this paper, we study the reconfiguration problem of bicliques on bipartite graphs under the token jumping model.
A \emph{biclique}, or more specifically a \emph{$(p,q)$-biclique}, is a set of vertices that induces a complete bipartite graph $K_{p,q}$.
We call a $(p,p)$-biclique a \emph{balanced biclique}.
Bicliques, especially those in bipartite graphs, are natural and important objects with theoretical and practical applications~\cite{Lin18}.
Our main result is to show that {\BBR} on bipartite graphs, which we formally define later, is PSPACE-complete.
Previously, only the NP-hardness was known on bipartite graphs, while it was already known to be PSPACE-complete on general graphs~\cite{HanakaIMMNSSV20}.
As a theoretical application of our result,
we show the PSPACE-completeness of {\CCR} with two connected components by presenting a reduction from {\BBR} on bipartite graphs.
This answers an open problem of Nakahata~\cite{Nakahata25} in the negative.

\subsection{Confronting bipartiteness}
Our main technical contribution is to overcome the restrictions imposed by bipartiteness.
On general graphs, problems involving bicliques often admit straightforward reductions from the corresponding problems for independent sets:
for example, a graph $G$ contains an independent set of size $p$ if and only if 
$G * I_{p}$ contains a $(p,p)$-biclique, where $G * I_{p}$ is the graph obtained from $G$ by adding $p$ vertices adjacent to all vertices of $G$.
However, arguably the most natural restriction of asking the input graphs to be bipartite makes hardness proofs considerably more challenging.
Note that $G * I_{p}$ is non-bipartite unless $G$ is edgeless.

For general graphs, Hanaka et al.~\cite[Theorem~12]{HanakaIMMNSSV20} showed that {\BBR} is PSPACE-complete by presenting a reduction from \textsc{Maximum} {\ISR},
which is basically the same $G * I_{p}$-reduction above (with an extra trick) and thus cannot be used for the bipartite case.
For bipartite graphs, they presented a reduction~\cite[Theorem~13]{HanakaIMMNSSV20}
from the problem of finding a large balanced biclique on bipartite graphs, which shows only NP-hardness.
To be more precise, their target problem there was
\emph{the spanning variant of \textsc{Subgraph Reconfiguration} under the TJ rule for the property ``a graph is an $(i, j)$-complete bipartite graph,''}
which is the reconfiguration problem of vertex sets that induce graphs that contain $K_{i,j}$ as a spanning subgraph.
On bipartite graphs, this problem with $i=j$ is equivalent to {\BBR}, and their NP-hardness result was shown for this case.
Our result in this paper implies the PSPACE-completeness of their problem as well.

Two further examples are the classical and parameterized complexity of finding a large balanced biclique in bipartite graphs.
(Note that, for general graphs, both cases admit the $G * I_{p}$ reduction.)
In their classic book~\cite[GT24]{GareyJ79},
Garey and Johnson stated without proof that the problem of finding a large balanced biclique on bipartite graphs is NP-complete. 
Later, in the 20th NP-completeness column~\cite{Johnson87}, Johnson presented a full proof and remarked that:
\begin{quote}
  ``\textit{Although a hint is provided in [G\&J] (``transformation from CLIQUE''),
  the NP-completeness proof seems to have been sufficiently subtle to have eluded many who tried to regenerate it.}''
\end{quote}
In the field of parameterized complexity,
the complexity of the problem of finding a large balanced biclique on bipartite graphs 
parameterized by the solution size was a long-standing open problem.
Downey and Fellows~\cite[\S 33.1]{DowneyF13} listed this problem as one of the most infamous open problems.\footnote{%
To be precise, the problem that they mentioned is a slightly more general problem of finding a large balanced complete bipartite subgraph on general graphs,
which is equivalent when input graphs are bipartite.}
There, it is said that
\begin{quote}
  ``\textit{Almost everyone considers that this problem should obviously be W[1]-hard, 
    and almost everyone quickly finds the same easy (but erroneous) proof. 
    It is rather an embarrassment to the field that the question remains open after all these years!}''
\end{quote}
This problem was finally resolved by Lin~\cite{Lin18}, who proved its W[1]-hardness.
This result received both the best paper award and the best student paper award at SODA 2015~\cite{Lin15}.

In the context of combinatorial reconfiguration, there is another example (not about bicliques) 
that actually shows a case where the bipartiteness restriction cannot be overcome.
{\ISR} is one of the most well-studied problems in the field of combinatorial reconfiguration.
It is studied under three modification models,
\emph{token addition and removal} (TAR)~\cite{ItoDHPSUU11}, \emph{token jumping} (TJ)~\cite{KaminskiMM12}, and \emph{token sliding} (TS)~\cite{HearnD05},
and known to be PSPACE-complete under all of them on general graphs.
The complexity of {\ISR} on bipartite graphs was open for some years,
and Lokshtanov and Mouawad \cite{LokshtanovM19} resolved it in a somewhat surprising way:
they showed that on bipartite graphs, it is PSPACE-complete under TS, but NP-complete under TJ and TAR\@.
Thus, under some models, restriction to bipartite graphs makes the problem strictly easier than on general graphs (assuming that $\mathrm{NP} \ne \mathrm{PSPACE}$).

\subsection{Our results}

\paragraph{The main result}
As mentioned above, our main result is to completely determine the complexity of {\BBR} on bipartite graphs by showing that it is PSPACE-complete.
We now formally define the problem.
Let $G = (V,E)$ be a graph.
A vertex set $S \subseteq V$ is a \emph{$(p,q)$-biclique} if $G[S]$ is isomorphic to the complete bipartite graph $K_{p,q}$.
A sequence $S_{0}, \dots, S_{\ell}$ of vertex sets is a \emph{TJ-sequence of $(p,q)$-bicliques}
if each $S_{i}$ is a $(p,q)$-biclique and $|S_{i} \setminus S_{i+1}| = |S_{i+1} \setminus S_{i}| = 1$ for $0 \le i \le \ell-1$.
We call such a move from $S_{i}$ to $S_{i+1}$ a \emph{TJ-move}.

Now we can formulate the problem {\BBR} as follows.
\begin{tcolorbox}
\begin{description}
  \setlength{\itemsep}{0pt}
  \item[Problem.] {\BBR}
  \item[Input.] A graph $G$ and two $(p,p)$-bicliques $S$ and $S'$ of $G$.
  \item[Question.] Is there a TJ-sequence of $(p,p)$-bicliques from $S$ to $S'$ in $G$?
\end{description}
\end{tcolorbox}

Our main result is formalized as follows.
\begin{restatable}{theorem}{thmBBR}
\label{thm:bbr-hardness}
{\BBR} is $\mathrm{PSPACE}$-complete on bipartite graphs.
\end{restatable}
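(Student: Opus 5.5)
Membership in PSPACE is routine: a nondeterministic machine can store the current $(p,p)$-biclique in polynomial space and guess TJ-moves one at a time, checking after each move that the new set still induces $K_{p,p}$. Savitch's theorem then gives PSPACE membership. All the difficulty lies in hardness.

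For hardness I would reduce from a PSPACE-complete problem whose configurations can be encoded by bicliques in a bipartite graph. The natural candidate is {\ISR} under TJ restricted to a class where it remains PSPACE-complete, or more directly Nondeterministic Constraint Logic. I would take the complement-type encoding. Given a graph $H$ with independent sets of size $k$, build a bipartite graph $G$ with sides $A$ and $B$. Both sides contain copies of $V(H)$, and $a_u$ is adjacent to $b_v$ unless $u$ and $v$ are equal or adjacent in $H$. Under this encoding, a set $X$ is independent in $H$ precisely when $\{a_u: u\in X\}\cup\{b_v:v\in Y\}$ is a biclique, provided $X$ and $Y$ are disjoint and nonadjacent.

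A biclique needs two halves, so I would add padding. The plan is to enlarge each side with gadget vertices: a large set of "anchor" vertices on each side, fully adjacent to the opposite copy of $V(H)$ and to the opposite anchors. A configuration then has the form $X_A\cup P_A$ versus $X_B\cup P_B$, and a TJ-move of $X$ in $H$ can be simulated by a bounded number of TJ-moves in $G$. Concretely, I would keep the $B$-side fixed as a designated anchor block and let the $A$-side carry the independent set. To force the $A$-side vertices to encode an independent set of $H$, I would instead use edge-vertex gadgets: for each edge $uv$ of $H$, add a $B$-vertex adjacent to all of $A$ except $a_u$ and $a_v$. Choosing enough of these $B$-vertices forbids both endpoints of an edge from appearing together.

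The main obstacle is exactly what the introduction warns about. I must prevent unintended bicliques, such as ones using few anchors, or mixing sides so that tokens migrate between the "encoding" and "padding" roles. This is where I need a counting argument. I would make the balanced size $p$ large relative to degrees and use many twin copies, so that every $(p,p)$-biclique in $G$ is forced to contain almost all anchors and to have its free part be an independent set of $H$. The plan is to prove a structural lemma characterizing all $(p,p)$-bicliques of $G$. Then I would show that any TJ-sequence in $G$ projects to a TJ-sequence of independent sets in $H$, possibly with intermediate steps where the size drops by one and is buffered by an anchor. Such steps would correspond to TAR with threshold $k-1$, which is equivalent to TJ for {\ISR} by known results.

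If this direct route fails because bipartite adjacency cannot forbid pairs within one side, I would fall back to a reduction from NCL. In that version, each NCL edge orientation is represented by choosing one of two twin-blocks on side $A$, and each vertex's in-flow constraint is enforced by $B$-side vertices whose neighborhoods miss specific orientation choices. The same structural lemma on forced anchors would do the key work there as well.
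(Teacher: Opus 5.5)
You have a genuine gap. Your PSPACE-membership argument is fine, but the hardness part is a list of candidate constructions rather than a proof, and the concrete claims you do make fail.

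\textbf{The encodings do not force independence.} In the copy-of-$V(H)$ encoding, a biclique only constrains pairs across the two sides. So $\{a_u : u\in X\}\cup\{b_v : v\in Y\}$ is a biclique for every disjoint, nonadjacent $X,Y$, whether or not $X$ is independent.

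The edge-vertex gadget does not repair this. The vertex for $uv$ can be chosen only if neither $u$ nor $v$ is on the chosen $A$-side, so the chosen edge-vertices lie in $E(V(H)\setminus X)$. For $|X|=k$ there are $|E(H)|-\sum_{u\in X}\deg(u)+|E(X)|$ of them available. For a $d$-regular $H$, the independent $k$-sets are therefore exactly those with the \emph{fewest} available edge-vertices, so ``choosing enough'' of them cannot force independence.

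What this gadget does force (Johnson's argument, which the paper uses) is that the \emph{unchosen} set $U=V\setminus X$ be dense: if $|U|=k$ and $\binom{k}{2}$ edge-vertices are chosen, then $U$ is a $k$-clique. The natural source problem is thus \textsc{Clique Reconfiguration}. Balance comes from the arithmetic condition $n-k-1=\binom{k}{2}$, obtained from $k\ge n/3$ plus isolated padding vertices; these act as one-sided anchors whose number is pinned exactly.

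Your two-sided anchors, being universal to the opposite side, impose no constraint at all. If a side had $p$ of them, they would form a $(p,p)$-biclique with any $p$ vertices of the other side. So the structural lemma you plan to prove is asserted for a construction that is never pinned down.

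\textbf{You never confront the obstacle that makes the bipartite case hard.} With $|U|=k$ and exactly $\binom{k}{2}$ edge-vertices, $F=E(U)$, and the biclique $(V\setminus U)\cup E(U)$ admits no side-preserving TJ-move. Every $v\in U$ misses $k-1$ chosen edge-vertices, and every unchosen edge has an endpoint on the chosen $V$-side.

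The paper's key step is to remove one more vertex from the $V$-side, so $|U|=k+1$ and $p=n-k-1=\binom{k}{2}$. A clique move with $K_i\setminus K_{i+1}=\{u_i\}$ and $K_{i+1}\setminus K_i=\{v_{i+1}\}$ is then simulated in three stages:
\begin{enumerate}
\item remove $v_{i+1}$ from the $V$-side, restoring the buffer vertex $w_i$;
\item swap $\{u_i,x\}$ for $\{v_{i+1},x\}$ on the $E$-side, one edge at a time;
\item add $u_i$ to the $V$-side, removing the new buffer vertex $w_{i+1}$.
\end{enumerate}

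The converse rests on one observation. Whenever the $V$-side changes, the $E$-side stays fixed, has $\binom{k}{2}$ elements, and lies in $E(V_i\cap V_{i+1})$ with $|V_i\cap V_{i+1}|=k$. Hence $V_i\cap V_{i+1}$ is a $k$-clique, and consecutive such cliques differ in at most one vertex.

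Your phrase ``size drops by one and is buffered by an anchor'' gestures at such a buffer slot. But without a concrete construction and this intersection argument, neither direction of the equivalence is established. The NCL fallback is likewise only a sketch with the same unresolved issues.
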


We prove \cref{thm:bbr-hardness} in \cref{sec:bbr-hardness}.
There, we present a reduction from {\CLR}~\cite{ItoDHPSUU11,ItoOO23} to {\BBR}.
We first follow the reduction of Johnson~\cite{Johnson87} from the problem of finding a large clique to that of finding a large balanced biclique.
Although this already gives a correspondence between cliques in a general graph and balanced bicliques in a bipartite graph, there remain two obstacles: 
\begin{enumerate}
  \setlength{\itemsep}{0pt}
  \item the obtained bicliques are \emph{locked}; i.e., no TJ-moves can be applied to them;
  \item the bicliques corresponding to two TJ-adjacent cliques differ in a large number of vertices.
\end{enumerate}
At first glance, the first issue looks severe. However, there is a trick to \emph{unlock} the bicliques by reducing their size by~$1$.
This is a simple but crucial step, and showing its correctness is the most nontrivial part of the proof.
The second issue is easier to handle by applying a rather standard technique of simulating a single step in the original problem ({\CLR})
by several steps in the target problem ({\BBR}).

\paragraph{The second result: an application of the main result}
As a theoretical application of the main result, we obtain the following theorem,
which will be proved in \cref{section:ccr-hardness}.
\begin{restatable}{theorem}{thmCCR}
\label{thm:ccr-hardness}
{\CCR} with two connected components of the same size is PSPACE-complete
under both CJ and CS rules even on co-bipartite graphs.
\end{restatable}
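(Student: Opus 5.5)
The proof of \cref{thm:ccr-hardness} consists of a reduction from {\BBR} on bipartite graphs, which is PSPACE-complete by \cref{thm:bbr-hardness}, together with a PSPACE membership argument. Membership is routine: a configuration of {\CCR} is a vertex subset of polynomial size, so by Savitch's theorem a nondeterministic polynomial-space walk through configurations suffices. For hardness, let $(H, S, S')$ be an instance of {\BBR} with $H=(A\cup B, E)$ bipartite and $S, S'$ $(p,p)$-bicliques. Working with the complement is natural: $X \subseteq A\cup B$ is a $(p,p)$-biclique of $H$ exactly when, in the complement $\overline{H}$, the set $X$ induces a graph consisting of two vertex-disjoint cliques with no edges between them. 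Since $H$ is bipartite, $\overline{H}$ is co-bipartite, with $A$ and $B$ cliques. Moreover, each side of the biclique lies entirely inside $A$ or entirely inside $B$ when $p\ge 2$, because every $K_{p,p}$ in a bipartite graph respects the bipartition.

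So in the co-bipartite graph $G=\overline{H}$, the set $X$ is a $(p,p)$-biclique of $H$ if and only if $G[X]$ is the disjoint union of two cliques of size $p$, one inside $A$ and one inside $B$. I will use the convention of Nakahata that a configuration of {\CCR} is a vertex set inducing exactly two connected components of prescribed sizes. I then need that every such configuration in $G$ of size $2p$ with two components of size $p$ is a biclique of $H$.

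Suppose $G[X]$ has exactly two components $C_1, C_2$. Each of $X\cap A$ and $X\cap B$ is a clique, hence lies within a single component. Two components force one of them, say $C_1$, to satisfy $C_1 \subseteq A$ and $C_2\subseteq B$. Each component is then a clique with no $G$-edges between them, which means complete adjacency in $H$. This gives the equivalence of configurations.

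Next I match the moves. Under CJ, a step removes one vertex and adds one vertex; this is exactly a TJ-move on the same vertex set, so the CJ reconfiguration graphs coincide. Under CS, the added vertex must be adjacent in $G$ to the removed one. I expect this to be the main obstacle: a TJ-move in $H$ may jump between nonadjacent vertices of $G$. I will resolve it by checking what a single TJ-move preserving the two components can look like. If the removed vertex $u$ and the added vertex $v$ lie on the same side, say $A$, then they are adjacent in $G$ since $A$ is a clique, so the move is a slide. If instead they lie on different sides, then after the move the component sizes become $p-1$ and $p+1$. This violates the equal-size requirement, so such moves never occur. Hence, when $p\ge 2$, every valid jump is a slide along a $G$-edge and CS coincides with CJ. The step requiring care is therefore verifying that the sizes of both components are part of the {\CCR} condition; if the definition only fixes the number of components, I would instead attach pendant-like gadgets, or argue that $p-1$ versus $p+1$ cannot arise directly. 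The equal-size version as stated in \cref{thm:ccr-hardness} avoids this issue.

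Finally, the small cases $p\le 1$ are trivial and can be excluded by assuming $p\ge 2$. This assumption is justified because the hard instances from \cref{thm:bbr-hardness} have large $p$.
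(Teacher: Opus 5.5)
Your correspondence between configurations is correct. Your observation that $X\cap A$ and $X\cap B$, being cliques of $\overline{H}$, each lie in a single component is a tidy alternative to the paper's dominating-set argument. The treatment of moves, however, rests on a misreading of the rules. In this paper, following Nakahata, CJ and CS are \emph{component-level} moves. A CJ-move replaces one whole component $X_i\in\cc_G(S_i)$ by a new component $X_{i+1}$, which has the same size because $M$ is fixed. CS only adds the requirement that $G[X_i\cup X_{i+1}]$ be connected. It is the CS1 rule that restricts to $|X_i\setminus X_{i+1}|=|X_{i+1}\setminus X_i|=1$.

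So your claim that a CJ step ``removes one vertex and adds one vertex'' and that the reconfiguration graphs coincide is false. A single CJ- or CS-move may take $X\cup Y$ to $X'\cup Y$ with $X'$ disjoint from $X$, which is not a TJ-move. You have shown only the easy direction: every TJ-move between $(p,p)$-bicliques of $H$ is a CS-move, hence a CJ-move, in $\overline{H}$. The converse is missing, namely that a CJ- or CS-sequence of $\mset{p,p}$-sets in $\overline{H}$ yields a TJ-sequence in $H$. Relatedly, your worry about component sizes $p-1$ versus $p+1$ is moot, since sizes are fixed by the definition of $M$-sets.

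To close the gap you need two further steps, as in the paper.
\begin{enumerate}
\item A CJ-move cannot transfer a component to the other side. If $X\subseteq A$ were replaced by a new component lying in $B$, both components of the new set would lie in the clique $B$ of $\overline{H}$ and would merge. Hence the move replaces $X$ by some $X'\subseteq A$ and keeps $Y$. Then $\overline{H}[X\cup X']$ is a clique, so every CJ-move is a CS-move.
\item Such a multi-vertex move must be simulated by TJ-moves. The paper invokes Nakahata's theorem that CS- and CS1-reachability coincide, together with the fact that CS1-moves between $\mset{p,p}$-sets are exactly TJ-moves. You could instead argue directly. Every vertex of $X\cup X'$ is adjacent in $H$ to all of $Y$, and $A$ is independent in $H$. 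So exchanging the vertices of $X\setminus X'$ for those of $X'\setminus X$ one at a time passes only through $(p,p)$-bicliques of $H$.
\end{enumerate}
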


We defer the formal definitions for \cref{thm:ccr-hardness} to \cref{section:ccr-hardness}.
Roughly speaking, {\CCR} asks to reconfigure a vertex set that induces a graph with specific sizes of connected components.
It is a generalization of {\ISR} as it can ask all connected components to have size~$1$.
Nakahata~\cite{Nakahata25} introduced {\CCR}, showed several positive and negative results,
and asked whether {\CCR} admits a polynomial-time algorithm when the number of connected components in the reconfigured set is a constant.
\cref{thm:ccr-hardness} answers this question in the negative assuming $\mathrm{P} \ne \mathrm{PSPACE}$.


\section{Preliminaries}

Let $G = (V,E)$ be a graph. 
For $S \subseteq V$, we denote by $G[S]$ the subgraph of $G$ induced by $S$,
and by $E(S) = \{\{u,v\} \in E \mid u, v \in S\}$ its edge set.

A \emph{clique} is a set of pairwise adjacent vertices and a \emph{$k$-clique} is a clique of size $k$.
An \emph{independent set} is a set of pairwise non-adjacent vertices.
A graph $G = (V, E)$ is \emph{bipartite} if its vertex set $V$ can be partitioned into two independent sets $A$ and $B$.
To emphasize that $G$ is bipartite with a bipartition $\{A,B\}$, we sometimes write $G = (A,B;E)$.
Note that if $G$ is connected, its bipartition is unique.

The \emph{complement} of a graph $G = (V,E)$ is $\bar{G} = (V, \{\{u,v\} \mid u,v \in V, \; u \ne v, \; \{u,v\} \notin E \})$.
The complement of a bipartite graph is a \emph{co-bipartite} graph.
The \emph{bipartite complement} of a bipartite graph $H = (A, B; E)$ is the bipartite graph with the same vertex set $A \cup B$
and the edge set $\{\{a,b\} \mid a \in A, \; b \in B, \; \{a,b\} \notin E\}$.

Let $S_{0}, \dots, S_{\ell}$ be a TJ-sequence of $(p,q)$-bicliques in a connected bipartite graph $G = (A,B;E)$ such that $|S_{0} \cap A| = p$ and $|S_{0} \cap B| = q$.
Observe that if $|p-q| \ne 1$, then $|S_{i} \cap A| = p$ and $|S_{i} \cap B| = q$ hold for every $i$.
This implies that every TJ-move involves vertices only in $A$ or only in $B$; that is,
either $(S_{i} \setminus S_{i+1}) \cup (S_{i+1} \setminus S_{i}) \subseteq A$ or 
$(S_{i} \setminus S_{i+1}) \cup (S_{i+1} \setminus S_{i}) \subseteq B$ for $0 \le i \le \ell-1$.
In our reductions, we consider the cases where $|p-q| \ne 1$ and thus have such \emph{side-preserving} TJ-moves only.

To simplify the presentation of reconfiguration steps,
we sometimes write $S - v$, $S + v$, and $S - u + v$
instead of $S \setminus \{v\}$, $S \cup \{v\}$, and $(S \setminus \{u\}) \cup \{v\}$, respectively.


\section{{\BBR} on bipartite graphs}
\label{sec:bbr-hardness}

This section is devoted to the proof of \cref{thm:bbr-hardness}, which is restated below.

\thmBBR*

Since the membership of {\BBR} in PSPACE is known~\cite{HanakaIMMNSSV20}, it suffices to show the PSPACE-hardness.
To this end, we present a reduction from {\CLR} on general graphs to {\BBR} on bipartite graphs.

Let $G = (V,E)$ be a graph. A sequence $K_{0}, \dots, K_{\ell}$ is a \emph{TJ-sequence of $k$-cliques}
if $K_{i}$ is a $k$-clique for $0 \le i \le \ell$
and $|K_{i} \setminus K_{i+1}| = |K_{i+1} \setminus K_{i}| = 1$ for $0 \le i \le \ell-1$.
Given a graph $G$ and $k$-cliques $K$ and $K'$ of $G$,
{\CLR} asks whether there exists a TJ-sequence of $k$-cliques from $K$ to~$K'$.

{\ISR} under TJ is defined almost the same way as {\CLR}, where the only difference is
that $k$-cliques in the definition are replaced with independent sets of size $k$.
{\ISR} under TS additionally requires that the swapped vertices in each step are adjacent.

It is known that {\CLR} is PSPACE-complete~\cite{ItoDHPSUU11,ItoOO23}.
For our purpose, we need the following hardness with an additional condition on $k$ and $n = |V|$.
\begin{lemma}
\label{lem:clr-n-vs-k}
{\CLR} of $k$-cliques on $n$-vertex graphs is PSPACE-complete even if $n-k-1 = \binom{k}{2}$.
\end{lemma}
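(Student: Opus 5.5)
We reduce from {\CLR} itself, which is PSPACE-complete~\cite{ItoDHPSUU11,ItoOO23}, by padding the input graph so that the relation $n-k-1=\binom{k}{2}$ holds exactly. Let $(G,K,K')$ be an instance with $G$ on $n$ vertices and target size $k$. We may assume $k\ge 2$ and $n>k$. Otherwise the instance is trivial and can be replaced by a fixed yes- or no-instance that satisfies the relation.

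The padding has two kinds of moves, and each preserves reconfigurability.

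\textbf{Raising $k$.} Take the join $G' = G * K_1$, i.e., add one universal vertex $u$. Then $n$ and $k$ both grow by~$1$, and the pair $K+u$, $K'+u$ forms the new instance.

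\textbf{Raising $n$.} Add an isolated vertex. This leaves $k$ unchanged and increases $n$ by $1$.

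\textbf{Correctness of raising $n$.} An isolated vertex lies in no $k$-clique when $k\ge 2$, so the set of $k$-cliques does not change at all.

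\textbf{Correctness of raising $k$.} Any $(k+1)$-clique $C$ of $G'$ either contains $u$, or is a $(k+1)$-clique of $G$.

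\emph{Cliques containing $u$.} If $u\in C$, then $C-u$ is a $k$-clique of $G$. Hence a TJ-sequence of $k$-cliques in $G$ lifts directly by adding $u$ to every set.

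\emph{Cliques avoiding $u$.} These cause the only possible trouble. A sequence in $G'$ might leave $u$ and move through $(k+1)$-cliques of $G$. We project such a sequence back to $G$ as follows:
\begin{itemize}
\item a set containing $u$ is mapped to the set with $u$ deleted;
\item a set $C$ not containing $u$ is mapped to $C$ minus one carefully chosen vertex.
\end{itemize}

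We now check that consecutive projections are equal or TJ-adjacent. Consider the first step where $u$ is removed: $C+u \to C+x$, where $C+x$ is a $(k+1)$-clique of $G$. Project $C+x$ to $C$, which is exactly the projection of $C+u$. For a later step $D\to D-y+z$ among $(k+1)$-cliques of $G$, suppose the current projection is $D-w$.
\begin{itemize}
\item If $w\ne y$, project the new set to $D-w-y+z$.
\item If $w=y$, the new projection is $D-y$, which is unchanged.
\end{itemize}
When $u$ reenters via $D\to D-y+u$, the target projection is $D-y$. This is TJ-adjacent to, or equal to, the current projection $D-w$.

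Equal consecutive projections are simply deleted, which yields a valid TJ-sequence in $G$ between $K$ and $K'$. So the two instances are equivalent.

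\textbf{Choosing the amount of padding.} Apply the join $t$ times and then add $s$ isolated vertices. The parameters become $n'=n+t+s$ and $k'=k+t$, and we need $n'-k'-1=\binom{k'}{2}$, that is, $n-k-1+s=\binom{k+t}{2}$. With $t=0$ this holds once $s=\binom{k}{2}-(n-k-1)$, provided that quantity is nonnegative. Otherwise increase $t$ until $\binom{k+t}{2}\ge n-k-1$, then set $s$ to the difference. Here $t\le n$, and $s$ is polynomial, so the reduction runs in polynomial time.

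The main obstacle is the projection argument for the join, since the other steps are bookkeeping. If that argument became messy, the fallback is to replace $u$ by attaching a large clique joined to $G$, or to pick $G$ from a hard family known to have no $(k+1)$-cliques. I expect the projection argument above to suffice.
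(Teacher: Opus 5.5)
Your proof is correct, but it takes a different route from the paper. The paper does not start from arbitrary {\CLR} instances. It uses the Hearn--Demaine hard instances of {\ISR} under TS, in which the vertex set is partitioned into cliques of size at most $3$ and both independent sets hit every part. These independent sets are therefore maximum, so TS and TJ coincide and $k \ge n/3$. Complementing gives {\CLR} with $k \ge n/3$, and for $n > 12$ this density already forces $n-k-1 \le \binom{k}{2}$, so isolated-vertex padding alone suffices.

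You treat {\CLR} as a black box and add a second padding operation, the universal vertex, which raises $k$ while keeping $n-k$ fixed. The price is your projection argument showing that the new $u$-avoiding $(k+1)$-cliques of $G$ create no new connectivity. It works because all $k$-subsets of a $(k+1)$-clique are pairwise TJ-adjacent, and TJ-adjacent $(k+1)$-cliques share a $k$-subset. Your case analysis checks out, including the invariant that the projection of $D$ always has the form $D-w$ with $w\in D$. The moves between two $u$-containing sets, which you left implicit, are trivial. Your bound $t \le n$ also keeps $s$ polynomial.

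The paper's route is shorter given the cited structural result and never has to analyze new cliques. Yours needs only the bare PSPACE-completeness of {\CLR}, with no structural property of the hard instances.
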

\begin{proof}
It is known that {\ISR} under TS is PSPACE-complete even if the vertex set of the input graph is partitioned into cliques of size at most~$3$
and the input independent sets intersect all these cliques~\cite{HearnD05} (see also~\cite[Section 3.2]{BonsmaC09}).
Observe that such independent sets are maximum and their size $k$ is at least~$n/3$.
Since TS and TJ are equivalent for maximum independent sets, {\ISR} under TJ is also PSPACE-complete if $k \ge n/3$.
Furthermore, since {\ISR} under TJ on $G$ is equivalent to {\CLR} on its complement $\bar{G}$,
it follows that {\CLR} is PSPACE-complete even if $k \ge n/3$.

Let $\langle G, K, K' \rangle$ be an instance of {\CLR} with $n = |V(G)|$ and $k = |K| = |K'|$ such that $k \ge n / 3$.
We assume that $n > 12$ since otherwise the problem can be solved in polynomial time.
A simple calculation under the assumptions $k \ge n / 3$ and $n > 12$ yields $n - k-1 \le \binom{k}{2}$.
If $n - k - 1 = \binom{k}{2}$, we are done.
Otherwise, we obtain a new graph $G'$ from $G$ by adding $\binom{k}{2} - (n - k - 1)$ isolated vertices.
Clearly, $\langle G', K, K' \rangle$ and $\langle G, K, K' \rangle$ are equivalent instances of {\CLR}
and $G'$ satisfies $n'-k-1 = \binom{k}{2}$, where $n' = |V(G')|$ ($= \binom{k}{2} + k + 1$).
\end{proof}

\paragraph{The bipartite graph $H$}
In the rest of this section, we fix an instance $\langle G, K, K'\rangle$ of {\CLR}
such that $G = (V,E)$ has $n$ vertices and $|K|=|K'|=k$.
By \cref{lem:clr-n-vs-k}, we may assume that $n-k-1 = \binom{k}{2}$.
We assume that $k \ge 3$ (and thus $n = k+1+ \binom{k}{2} \ge 7$), since otherwise the problem can be solved in polynomial time.
This implies that $G$ contains a triangle and hence, it is non-bipartite.

Let $H$ be the bipartite complement of the \emph{incidence graph} of $G$ (see \cref{fig:reduction});
that is, 
\[
  H = (V, E; \, \{\{v,e\} \mid v \in V, \, e \in E, \, v \notin e\}).
\]

\begin{figure}[tb]
  \centering
  \includegraphics{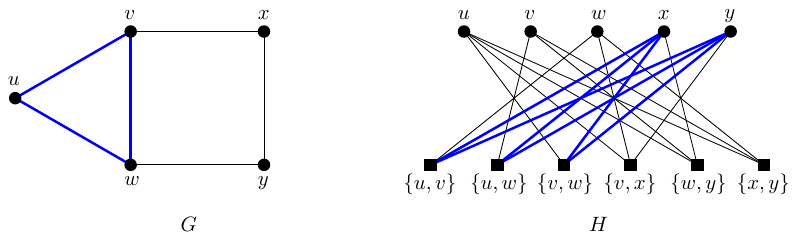}
  \caption{The construction for \cref{thm:bbr-hardness} (without the condition $n-k-1 = \binom{k}{2}$). 
    A clique $K \subseteq V$ of $G = (V,E)$ corresponds to the biclique $(V \setminus K) \cup E(K)$.
    For example, $\{u,v,w\}$ corresponds to $\{x,y\} \cup \{\{u,v\}, \{u,w\}, \{v,w\}\}$.}
  \label{fig:reduction}
\end{figure}

\begin{observation}
$H$ is connected (and thus, its bipartition $\{V,E\}$ is unique).
\end{observation}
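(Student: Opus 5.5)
To prove that $H$ is connected, I will show that any two vertices of $V$ have a common neighbor in $E$, and that every vertex of $E$ has at least one neighbor in $V$. These two facts suffice: the vertices of $V$ then lie in a single component, and every edge-vertex $e \in E$ attaches to that component. The uniqueness of the bipartition then follows from the standard fact recalled in the preliminaries for connected bipartite graphs. If $E$ were empty there would be nothing on the other side, but $G$ contains a triangle since $k \ge 3$ and $K$ is a $k$-clique, so $E \neq \emptyset$.

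For the first fact, take distinct $u, v \in V$. A common neighbor in $H$ is an edge $e \in E$ with $u \notin e$ and $v \notin e$, that is, an edge of $G - \{u,v\}$. The clique $K$ has $k \ge 3$ vertices, so at least $k - 2 \ge 1$ of them avoid $\{u,v\}$. This alone is not enough, because I need two vertices of $K$ outside $\{u,v\}$. When $k \ge 4$ this is immediate, since $K \setminus \{u,v\}$ contains at least two vertices, which are adjacent and give the required edge.

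The case $k = 3$ with $\{u,v\} \subseteq K$ is the only delicate point. Here I would use the other clique $K'$, or else fall back on a direct argument. Fortunately, the assumption $n-k-1=\binom{k}{2}$ does not constrain the edges, so I instead route through a third vertex. Let $w$ be the remaining vertex of $K$. The edge $\{u,w\}$ avoids $v$, so it is adjacent to $v$ in $H$. The edge $\{v,w\}$ avoids $u$, so it is adjacent to $u$ in $H$. Both edges avoid a vertex $x \notin K$, which exists because $n \ge 7$. Hence $v$ and $x$ are joined through $\{u,w\}$, and $u$ and $x$ are joined through $\{v,w\}$, so $u$ and $v$ lie in the same component.

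More uniformly, I can argue as follows. For any $x \in V$, there is a vertex $y \in K$ and an edge of $K$ avoiding both $x$ and $y$; such an edge exists because $k \ge 3$, by choosing $y \in K$ so that $K \setminus \{x,y\}$ has two vertices. Hence every vertex of $V$ is connected to some vertex of $K$. Any two vertices of $K$ are connected either directly, via an edge of $K$ avoiding both when $k \ge 4$, or via the path through an outside vertex described above when $k = 3$.

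For the second fact, take $e = \{a,b\} \in E$. Since $n \ge 7 > 2$, some vertex of $V$ lies outside $e$ and is therefore adjacent to $e$ in $H$. Combining the two facts shows that $H$ is connected. The only step requiring care is the case $k = 3$, which is handled by the short path through a vertex outside $K$.
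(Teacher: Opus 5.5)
Your proof is correct, but it runs in the opposite direction from the paper's. The paper makes the $E$-side the connected core. Every $e\in E$ is adjacent in $H$ to all $n-2$ vertices outside it, so for $n\ge 5$ any two $e,e'\in E$ share a neighbor in $V$ (since $|e\cup e'|\le 4<n$). Each $v\in V$ is then attached because $G$ contains a triangle, so it is not a subgraph of a star, and hence some edge avoids $v$.

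You make the $V$-side the core instead. This makes attaching each $e\in E$ trivial ($n>2$), but the core step becomes harder. Two vertices $u,v$ have a common $H$-neighbor exactly when $G-\{u,v\}$ has an edge, and this can fail, e.g.\ when $k=3$ and $G$ is a triangle plus isolated vertices. That is why you need the clique $K$, the case split, and the detour $v,\{u,w\},x,\{v,w\},u$ through some $x\notin K$. The paper's route is uniform and uses only $n\ge 5$ and the non-star property, whereas yours relies on the specific clique $K$.

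One presentational point: your opening claim that any two vertices of $V$ have a common neighbor in $E$ is false in exactly that $k=3$ situation. What you actually prove, and all you need, is that $V$ lies in a single component of $H$, so state that claim instead. The ``more uniformly'' paragraph is redundant. It also silently requires $y=x$ when $x\in K$ and $k=3$.
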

\begin{proof}
Since each $e \in E$ is adjacent to $n-2$ elements of $V$ and since $n-2 > n/2$ by the assumption $n \ge 7$, any two vertices $e, e' \in E$ of $H$ have a common neighbor in $V$.
Hence, all elements of $E$ belong to the same connected component of $H$.
Since $G$ is non-bipartite, no vertex of $G$ is incident to all edges of $G$ (otherwise $G$ would be a subgraph of a star).
This implies that each $v \in V$ is adjacent to at least one element of $E$ in $H$. Thus, $H$ is connected.
\end{proof}

The following lemma follows directly from the discussion by Johnson~\cite{Johnson87}.
We include a proof here to be self-contained.
\begin{lemma}
[\cite{Johnson87}]
\label{lem:clique-biclique}
For non-empty sets $U \subseteq V$ and $F \subseteq E$,
$(V \setminus U) \cup F$ is a biclique of $H$
if and only if
$F \subseteq E(U)$.
\end{lemma}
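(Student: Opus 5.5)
The plan is to unpack the definition of a biclique in the bipartite graph $H$ and check the two directions separately. Set $S = (V \setminus U) \cup F$ with $U$ and $F$ non-empty. The part $V \setminus U$ lies in the side $V$ of $H$, which is an independent set, and $F$ lies in the side $E$, which is also independent. So $H[S]$ has no edges inside $V \setminus U$ or inside $F$, and every edge of $H[S]$ joins a vertex $v \in V \setminus U$ to an element $e \in F$.

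I will first reduce the biclique condition to a single combinatorial statement: $S$ is a biclique of $H$ if and only if every $v \in V \setminus U$ is adjacent in $H$ to every $e \in F$. By the definition of $H$, the pair $\{v,e\}$ is an edge of $H$ exactly when $v \notin e$. Hence this completeness condition says that no edge $e \in F$ has an endpoint outside $U$, that is, $e \subseteq U$ for every $e \in F$, which is precisely $F \subseteq E(U)$.

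For the ``if'' direction, assume $F \subseteq E(U)$. Then every pair $(v,e)$ with $v \in V \setminus U$ and $e \in F$ is an edge of $H$. Both sides of $S$ are independent, so $H[S]$ is the complete bipartite graph with parts $V \setminus U$ and $F$, and $S$ is a $(|V \setminus U|, |F|)$-biclique. One detail needs care: $V \setminus U$ might be empty, in which case $H[S]$ is edgeless on $F$. This is a $K_{0,|F|}$, and I will treat it as a degenerate biclique under the isomorphism definition.

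For the ``only if'' direction, assume $H[S] \cong K_{p,q}$. The subtle step, which I expect to be the only real obstacle, is showing that the bipartition of this $K_{p,q}$ is exactly $\{V \setminus U, F\}$ rather than some other split of $S$. If $V \setminus U = \emptyset$, then $S = F$ and $F \subseteq E = E(V) = E(U)$ trivially, so that case is done. Otherwise both $V \setminus U$ and $F$ are non-empty, since $F \ne \emptyset$ is given. Suppose some $v \in V \setminus U$ and $e \in F$ are non-adjacent in $H$. In a complete bipartite graph, two non-adjacent vertices lie on the same side. Then $v$ and $e$ share a side, that side also contains the whole $V$-part and the whole $E$-part of $S$ (each of which is independent), and so all of $S$ lies on one side. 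Hence $H[S]$ must be edgeless, so $v \in e$ for all $v \in V \setminus U$ and $e \in F$. But an edge has only two endpoints, so this forces $|V \setminus U| \le 2$, and I do not yet get a contradiction.

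To finish this case I will use the convention that a biclique with both parts nonempty must be connected, in which case the edgeless case is excluded directly. If the paper does not adopt that convention, the edgeless case has to be ruled out (or absorbed) separately, and this is where I expect to need a small caveat about the degenerate case.
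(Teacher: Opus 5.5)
Your proof is correct and is essentially the paper's: both directions come down to $\{v,e\}\in E(H)\iff v\notin e$, so $V\setminus U$ and $F$ are completely joined exactly when $F\subseteq E(U)$. The paper's only-if step (``$\{v,e\}\notin E(H)$, thus not a biclique'') silently assumes the two sides are $V\setminus U$ and $F$, so the edgeless case you flag is a real imprecision in the statement, not a gap in your argument (e.g.\ $U=V\setminus\{v\}$ and $F=\{e\}$ with $v\in e$ give an edgeless pair, a $K_{2,0}$, although $F\not\subseteq E(U)$); reading ``biclique'' as having sides $S\cap V$ and $S\cap E$, which is what your two conventions together amount to, is the right fix and costs nothing, since in every application both sides have at least $p\ge 3$ elements.
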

\begin{proof}
We first prove the only-if direction.
If an edge $e \in F$ does not belong to $E(U)$, 
then $e$ has a vertex $v \in V \setminus U$ as an endpoint, and hence $\{v, e\} \notin E(H)$.
Thus, $(V \setminus U) \cup F$ is not a biclique.
For the if direction, assume that $F \subseteq E(U)$.
Then, for all $e \in F$ and $v \in V \setminus U$,
we have $v \notin e$, and thus $\{v,e\} \in E(H)$.
Therefore, $(V \setminus U) \cup F$ is a biclique of $H$.
\end{proof}

Observe that, when $|F| = \binom{|U|}{2}$,
\cref{lem:clique-biclique} implies that 
$(V \setminus U) \cup F$ is a biclique of $H$
if and only if
$U$ is a clique of $G$.
See \cref{fig:reduction}.
Now, a seemingly promising approach would be to directly use this correspondence
for obtaining a TJ-sequence of bicliques from a TJ-sequence of cliques (and vice versa).
However, we can see that the biclique is \emph{locked} in the sense that $(V \setminus U) \cup F$ admits no TJ-moves in this case:
since $|F| = \binom{|U|}{2}$ and $U$ is a clique, we have $F = E(U)$;
this implies that, in $H$, each $v \in U$ has non-neighbors $\{\{u,v\} \mid u \in U \setminus \{v\}\} \subseteq F$;
and each $\{u,v\} \in E \setminus F$ has non-neighbors $u,v \in V \setminus U$.

Our trick to \emph{unlock} the biclique is to remove one arbitrary vertex in the $V$-side of the biclique.
Although this step may appear ad hoc, it is precisely what makes the reconfiguration possible.

\paragraph{The bicliques $S$ and $S'$}
Recall that $K$ and $K'$ are $k$-cliques of $G$ with $n-k-1 = \binom{k}{2}$.
Let $\hat{S} = (V \setminus K) \cup E(K)$ and $\hat{S}' = (V \setminus K') \cup E(K')$.
By \cref{lem:clique-biclique}, $\hat{S}$ and $\hat{S}'$ are $(n-k, \binom{k}{2})$-bicliques of $H$.
Let $S$ be the set obtained from $\hat{S}$ by removing an arbitrary vertex from the $V$-side $\hat{S} \cap V = V \setminus K$.
Similarly, let $S'$ be the set obtained from $\hat{S}'$ by removing an arbitrary vertex from the $V$-side $\hat{S}' \cap V = V \setminus K'$.
Let $p = n-k-1$ ($= \binom{k}{2}$). Then, $S$ and $S'$ are $(p,p)$-bicliques of $H$.
Since $H$ has the unique bipartition $\{V,E\}$, each TJ-move happens in the $V$-side or in the $E$-side.
In particular, the size of each side is always $p$.

In the following, we show that
there is a TJ-sequence of $k$-cliques from $K$ to $K'$ in $G$
if and only if
there is a TJ-sequence of $(p,p)$-bicliques from $S$ to $S'$ in $H$.

\begin{lemma}
If there is a TJ-sequence of $k$-cliques from $K$ to $K'$ in $G$,
then there is a TJ-sequence of $(p,p)$-bicliques from $S$ to $S'$ in $H$.
\end{lemma}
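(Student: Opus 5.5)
The plan is to simulate each TJ-move of cliques in $G$ by a short sequence of side-preserving TJ-moves in $H$. Throughout, \cref{lem:clique-biclique} is the only tool needed to certify bicliques. For a $k$-clique $C$ of $G$ and a vertex $x \in V \setminus C$, write $S(C,x) = (V \setminus (C \cup \{x\})) \cup E(C)$. Both sides have size $n-k-1 = p$ and $|E(C)| = \binom{k}{2} = p$. Since $E(C) \subseteq E(C \cup \{x\})$, \cref{lem:clique-biclique} with $U = C \cup \{x\}$ and $F = E(C)$ shows that $S(C,x)$ is a $(p,p)$-biclique; it is induced because $H$ is bipartite. In particular $S = S(K,x_0)$ and $S' = S(K',x_0')$ for the removed vertices $x_0$ and $x_0'$.

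\emph{Step 1: changing the removed vertex.} For a fixed clique $C$ and distinct $x, y \in V \setminus C$, the move $S(C,x) \to S(C,y)$ is the single $V$-side TJ-move $-y+x$. Both endpoints are already $(p,p)$-bicliques by the observation above, so this move is valid. Hence all sets $S(C,\cdot)$ are mutually reachable.

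\emph{Step 2: simulating one clique move.} Let $K_{i+1} = K_i - u + w$ with $u \in K_i$ and $w \notin K_i$, and put $C = K_i \cap K_{i+1} = K_i - u$. Because $K_{i+1}$ is a clique, $w$ is adjacent in $G$ to every $c \in C$. By Step 1, move first to $S(K_i, w)$, whose $V$-side is $V \setminus (K_i \cup \{w\})$. Keeping this $V$-side fixed, perform $k-1$ $E$-side TJ-moves, one for each $c \in C$, replacing $\{u,c\}$ by $\{w,c\}$.

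At every intermediate stage the $E$-side $F$ consists of $E(C)$, some edges $\{u,c\}$, and some edges $\{w,c\}$. All of these are edges of $G$ contained in $U = K_i \cup \{w\}$, so $F \subseteq E(U)$. Then \cref{lem:clique-biclique} certifies a biclique, and each side still has size $p$. Each move is a genuine TJ-move, since the removed edge $\{u,c\}$ lies in the current $F$ and the added edge $\{w,c\}$ does not.

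After all $k-1$ moves, the $E$-side equals $E(C) \cup \{\{w,c\} : c \in C\} = E(K_{i+1})$. The $V$-side is $V \setminus (K_{i+1} \cup \{u\})$. So the current set is exactly $S(K_{i+1}, u)$, and the simulation of this clique move is complete.

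\emph{Step 3: concatenation.} Chain these simulations along the given TJ-sequence $K = K_0, \dots, K_\ell = K'$, starting from $S = S(K,x_0)$. Finish with one application of Step 1 to reach $S(K',x_0') = S'$.

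I do not expect a serious obstacle in this direction. The only points needing care are to route through the removed vertex $w$, so that $U = K_i \cup \{w\}$ contains both old and new edges, and to check that each step is a true TJ-move of size exactly one on a single side. The harder part of the reduction is the converse lemma, not this one.
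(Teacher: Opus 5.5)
Your proof is correct and is the paper's argument: first jump the removed $V$-side vertex onto the incoming clique vertex, then perform the $k-1$ edge swaps $\{u,c\}\to\{w,c\}$, each certified by \cref{lem:clique-biclique} with $U=K_i\cup\{w\}$, and end at the set whose removed vertex is $u$. The only difference is bookkeeping: you merge the paper's third step into the next simulation's first step via your reusable Step~1, which also handles $S$ and $S'$ directly, including the degenerate case $K=K'$.
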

\begin{proof}
Let $K_{0}, K_{1}, \dots, K_{\ell}$ be a TJ-sequence of $k$-cliques from $K = K_{0}$ to $K' = K_{\ell}$ in $G$.
We define sequences $\hat{S}_{0}, \hat{S}_{1}, \dots, \hat{S}_{\ell}$ and $S_{0}, S_{1}, \dots, S_{\ell}$ of vertex sets of $H$ as follows.
\begin{itemize}
  \setlength{\itemsep}{0pt}
  \item Set $\hat{S}_{0} = \hat{S}$, $S_{0} = S$, $\hat{S}_{\ell} = \hat{S}'$, and $S_{\ell} = S'$.
  \item For $1 \le i \le \ell-1$, first set $\hat{S}_{i} = (V \setminus K_{i}) \cup E(K_{i})$,
  and then set $S_{i}$ to the set obtained from $\hat{S}_{i}$ by removing an arbitrary vertex from the $V$-side $V \setminus K_{i}$.
  \item For $0 \le i \le \ell$, let $w_{i}$ denote the element removed from $\hat{S}_{i}$ to obtain $S_{i}$; i.e., $S_{i} = \hat{S}_{i} - w_{i}$.
\end{itemize}
By \cref{lem:clique-biclique}, $\hat{S}_{i}$ is a $(p+1, p)$-biclique of $H$, and thus $S_{i}$ is a $(p, p)$-biclique of $H$.

To prove the lemma, it suffices to show that for $0 \le i \le \ell-1$,
there is a TJ-sequence of $(p,p)$-bicliques from $S_{i}$ to $S_{i+1}$ in $H$.
Fix $i \in \{0, 1, \dots, \ell-1\}$
and let $K_{i} \setminus K_{i+1} = \{u_{i}\}$ and $K_{i+1} \setminus K_{i} = \{v_{i+1}\}$.
We construct a TJ-sequence from $S_{i}$ to $S_{i+1}$ as follows.
\begin{enumerate}
  \item Remove $v_{i+1}$ and add $w_{i}$ (if $w_{i} \ne v_{i+1}$).

  \item For each $x \in (K_{i} \cup K_{i+1}) \setminus \{u_{i}, v_{i+1}\}$, 
  remove $\{u_{i}, x\}$ and add $\{v_{i+1}, x\}$.

  \item Remove $w_{i+1}$ and add $u_{i}$ (if $w_{i+1} \ne u_{i}$).
\end{enumerate}
We show the validity of this sequence below.

The set obtained by the first step is $S_{i} - v_{i+1} + w_{i} = (V \setminus (K_{i} + v_{i+1})) \cup E(K_{i})$,
which is a $(p,p)$-biclique by \cref{lem:clique-biclique}.
Since $K_{i} + v_{i+1} = K_{i} \cup K_{i+1}$,
it follows that $(V \setminus (K_{i} + v_{i+1})) \cup E(K_{i}) = (V \setminus (K_{i} \cup K_{i+1})) \cup E(K_{i})$.

During the second step, the $V$-side $V \setminus (K_{i} \cup K_{i+1})$ stays the same
and the $E$-side is always a subset of $E(K_{i}) \cup E(K_{i+1}) \subseteq E(K_{i} \cup K_{i+1})$.
By \cref{lem:clique-biclique}, every set appearing in this step is a $(p,p)$-biclique.
The set obtained after the second step is 
$(V \setminus (K_{i} \cup K_{i+1})) \cup E(K_{i+1}) = (V \setminus (K_{i+1} + u_{i})) \cup E(K_{i+1})$
as $K_{i} \cup K_{i+1} = K_{i+1} + u_{i}$.

The third step obtains $(V \setminus (K_{i+1} + w_{i+1})) \cup E(K_{i+1}) = S_{i+1}$.
\end{proof}

\begin{lemma}
If there is a TJ-sequence of $(p,p)$-bicliques from $S$ to $S'$ in $H$,
then there is a TJ-sequence of $k$-cliques from $K$ to $K'$ in $G$.
\end{lemma}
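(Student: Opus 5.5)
The plan is to read off a clique from each $(p,p)$-biclique and track it along the given sequence. The key is to understand exactly when a $V$-side move is possible; this turns out to force the biclique into the "locked" form $(V \setminus C) \cup E(C)$ with one extra vertex deleted, for some $k$-clique $C$.

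\emph{Step 1 (structure of a single biclique).} Let $T$ be any $(p,p)$-biclique of $H$ in the sequence. As observed, every TJ-move is side-preserving, so $|T \cap V| = p = n-k-1$ and $|T \cap E| = p = \binom{k}{2}$. Put $U = V \setminus T$ and $F = T \cap E$. Then $|U| = k+1$, and $F \subseteq E(U)$ by \cref{lem:clique-biclique}. So $(U,F)$ is a graph on $k+1$ vertices with exactly $\binom{k}{2}$ edges.

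\emph{Step 2 (when $V$-side moves occur).} Suppose a $V$-side move goes from $T$ to $T' = T - b + a$ with $a,b \in V$. Then $U' = U - a + b$, and $F$ is unchanged. By \cref{lem:clique-biclique} we need $F \subseteq E(U')$, so no edge of $F$ is incident to $a$. Thus $F$ is a set of $\binom{k}{2}$ edges inside the $k$-set $U - a$, which forces $F = E(U-a)$ and $U-a$ to be a $k$-clique $C$ of $G$. Moreover $F = E(C) \subseteq E(U')$ with $U' = C + b$. Hence every $V$-side move has the form
\[
(C+a,\,E(C)) \to (C+b,\,E(C))
\]
in terms of $(U,F)$, for a fixed $k$-clique $C$. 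In particular, the state determines $C$ as the unique vertex set spanned by $F$.

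\emph{Step 3 (extracting the clique sequence).} Cut the TJ-sequence of bicliques at its $V$-side moves, so that within each maximal block $U$ is constant and only $E$-side moves occur. For each block we assign $k$-cliques at its two ends.
\begin{itemize}
\item The first block starts at $S$, which has $U = K + w_0$ and $F = E(K)$, so its starting clique is $K$.
\item The last block ends at $S'$, whose clique is $K'$.
\item Any other block boundary is a $V$-side move, and by Step 2 the state there has $F = E(C)$ for a $k$-clique $C \subseteq U$.
\end{itemize}
So each block has a fixed $(k+1)$-set $U$, and cliques $C$ and $C'$ at its start and end with $F = E(C)$ and $F = E(C')$ respectively, both contained in $U$. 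Since $|U| = k+1$, we can write $C = U - x$ and $C' = U - x'$. Hence $C = C'$ or $|C \setminus C'| = |C' \setminus C| = 1$, i.e., $C'$ is obtained from $C$ by at most one TJ-move of $k$-cliques.

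Across a $V$-side move the clique $C$ is the same on both sides by Step 2. Concatenating the block cliques and deleting repetitions therefore gives a TJ-sequence of $k$-cliques from $K$ to $K'$.

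\emph{Main obstacle.} The crux is Step 2: the counting argument showing that $F$ misses the leaving vertex $a$ and has exactly $\binom{k}{2}$ edges, which forces $U-a$ to be a clique with $F = E(U-a)$. This is exactly where the hypothesis $p = \binom{k}{2}$ from \cref{lem:clr-n-vs-k} is used. The states inside a block need not correspond to cliques at all, and that is harmless: only the block endpoints matter, and any two $k$-subsets of a common $(k+1)$-set are TJ-adjacent or equal.
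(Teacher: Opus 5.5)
Your proof is correct and follows essentially the same route as the paper. Your Step 2 is the paper's key claim that a $V$-side move forces $V_i \cap V_{i+1}$ to be a $k$-clique whose edge set is the unchanged $E$-side, and your block decomposition repackages the paper's step-by-step definition ($K_{i+1} = K_i$ unless the $V$-side changes); both finish with the fact that two $k$-subsets of a common $(k+1)$-set are equal or TJ-adjacent.
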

\begin{proof}
Let $S_{0}, \dots, S_{\ell}$ be a TJ-sequence of $(p,p)$-bicliques from $S = S_{0}$ to $S' = S_{\ell}$ in $H$.
For $0 \le i \le \ell$, let $V_{i} = V \setminus S_{i}$. Note that $|V_{i}| = |V \setminus S_{i}| = n - p = k+1$.

\begin{claim}
\label{clm:intersection-clique}
For $0 \le i \le \ell-1$,
if $V_{i} \ne V_{i+1}$, then $V_{i} \cap V_{i+1}$ is a $k$-clique of $G$. 
\end{claim}
\begin{subproof}
Since $V_{i} \ne V_{i+1}$, it holds that $V \cap S_{i} \ne V \cap S_{i+1}$, which implies that $E \cap S_{i} = E \cap S_{i+1}$.
Let $Y = E \cap S_{i}= E \cap S_{i+1}$.
Since $S_{i}$ and $S_{i+1}$ are bicliques of $H$,
\cref{lem:clique-biclique} implies that $Y \subseteq E(V_{i})$ and $Y \subseteq E(V_{i+1})$;
that is, $Y \subseteq E(V_{i}) \cap E(V_{i+1}) = E(V_{i} \cap V_{i+1})$.
Since $|V_{i} \cap V_{i+1}| = |V_{i}|-1 = k$ and $|Y| = p = \binom{k}{2}$, 
$V_{i} \cap V_{i+1}$ is a $k$-clique of $G$.
\end{subproof}

Now we define a sequence $K_{0}, \dots, K_{\ell+1}$ as follows:
\begin{itemize}
  \setlength{\itemsep}{0pt}
  \item $K_{0} = K$, $K_{\ell+1} = K'$;
  \item for $0 \le i \le \ell-1$, if $V_{i+1} = V_{i}$, then $K_{i+1} = K_{i}$; otherwise, $K_{i+1} = V_{i} \cap V_{i+1}$.
\end{itemize}
By \cref{clm:intersection-clique}, each $K_{i}$ is a $k$-clique of $G$.
Observe that, for $0 \le i \le \ell$, the set $K_{i}$ is a proper subset of $V_{i}$ with exactly one missing element; that is, $K_{i} = V_{i} \setminus \{w\}$ for some $w \in V_{i}$.
To prove the lemma, it suffices to show that either $K_{i} = K_{i+1}$ or $|K_{i} \setminus K_{i+1}| = |K_{i+1} \setminus K_{i}| = 1$.

First assume that $0 \le i \le \ell-1$.
Assume also that $V_{i+1} \ne V_{i}$ (and thus $|V_{i} \setminus V_{i+1}| = |V_{i+1} \setminus V_{i}| = 1$) since otherwise $K_{i} = K_{i+1}$.
Let $K_{i} = V_{i} \setminus \{w\}$ for some $w \in V_{i}$.
If $w \notin V_{i+1}$, then
\[
  K_{i+1} = V_{i} \cap V_{i+1} = (V_{i} \setminus \{w\}) \cap V_{i+1} = K_{i} \cap V_{i+1} = K_{i},
\]
where the last equality holds by $|K_{i+1}| = |K_{i}|$.
If $w \in V_{i+1}$, then
\begin{align*}
  |K_{i+1} \setminus K_{i}| 
  &= |(V_{i} \cap V_{i+1}) \setminus (V_{i} \setminus \{w\})|
  = |\{w\} \cap V_{i+1}|
  = |\{w\}| = 1,
  \\
  |K_{i} \setminus K_{i+1}|
  &= |(V_{i} \setminus \{w\}) \setminus (V_{i} \cap V_{i+1})|
  = |V_{i} \setminus (V_{i} \cap V_{i+1})|
  = |V_{i} \setminus V_{i+1}| = 1.
\end{align*}

Finally, we consider the case of $i = \ell$.
Since $S_{\ell}$ is a biclique of $H$ and $S_{\ell} \cap E = E(K') = E(K_{\ell+1})$, 
\cref{lem:clique-biclique} implies that $E(K_{\ell+1}) \subseteq E(V \setminus S_{\ell}) = E(V_{\ell})$, and thus $K_{\ell+1} \subseteq V_{\ell}$.
Since $K_{\ell} \subseteq V_{\ell}$ and $|K_{\ell}| = |K_{\ell+1}| = |V_{\ell}| - 1$, 
it follows that either $K_{\ell} = K_{\ell+1}$ or $|K_{\ell} \setminus K_{\ell+1}| = |K_{\ell+1} \setminus K_{\ell}| = 1$.
\end{proof}


\section{{\CCR} with two connected components}
\label{section:ccr-hardness}
Let $G = (V,E)$ be a graph.
For $S \subseteq V$, let $\cc_{G}(S)$ be the set of the vertex sets of connected components of $G[S]$
and let $\mu_{G}(S)$ be the multiset of the orders of connected components of $G[S]$;
that is, $\mu_{G}(S) = \mset{|X| : X \in \cc_{G}(S)}$.\footnote{%
We denote a multiset by double curly braces.}
Let $M$ be a multiset of positive integers. 
If $S \subseteq V$ satisfies $\mu_{G}(S) = M$, then we call $S$ an \emph{$M$-set}.

For {\CCR}, the two rules CJ (component jumping) and CS (component sliding) are studied.
A sequence $S_{0}, \dots, S_{\ell}$ of vertex sets is a \emph{CJ-sequence of $M$-sets} in $G$
if each $S_{i}$ is an $M$-set and $|\cc_{G}(S_{i}) \setminus \cc_{G}(S_{i+1})| = |\cc_{G}(S_{i+1}) \setminus \cc_{G}(S_{i})| = 1$ for $0 \le i \le \ell-1$.
We call such a move from $S_{i}$ to $S_{i+1}$ a \emph{CJ-move}.
A CJ-sequence $S_{0}, \dots, S_{\ell}$ of $M$-sets is a \emph{CS-sequence} of $M$-sets if it additionally satisfies the following condition for $0 \le i \le \ell-1$:
the unique elements $X_{i} \in \cc_{G}(S_{i}) \setminus \cc_{G}(S_{i+1})$ and 
$X_{i+1} \in \cc_{G}(S_{i+1}) \setminus \cc_{G}(S_{i})$ form a connected graph; that is, $G[X_{i} \cup X_{i+1}]$ is connected.
Furthermore, if it also satisfies $|X_{i} \setminus X_{i+1}| = |X_{i+1} \setminus X_{i}| = 1$ for $0 \le i \le \ell-1$, then the sequence is a \emph{CS1-sequence} of $M$-sets.
We define \emph{CS-moves} and \emph{CS1-moves} analogously.
It is known that 
there is a CS-sequence between two $M$-sets if and only if 
there is a CS1-sequence between them~\cite{Nakahata25}.

Now we are ready to define the problem and recall \cref{thm:ccr-hardness}.
\begin{tcolorbox}
\begin{description}
  \setlength{\itemsep}{0pt}
  \item[Problem.] {\CCR} under CJ (CS, resp.)
  \item[Input.] A graph $G$ and two $M$-sets $S$ and $S'$.
  \item[Question.] Is there a CJ-sequence (CS-sequence, resp.) of $M$-sets from $S$ to $S'$ in $G$?
\end{description}
\end{tcolorbox}

\thmCCR*

Since the membership in PSPACE is known~\cite{Nakahata25}, we only show the PSPACE-hardness 
by presenting a reduction from {\BBR} on bipartite graphs.

The next lemma is the key observation showing that the $(p,q)$-bicliques in a bipartite graph $G$
are exactly the $\mset{p,q}$-sets of its complement $\bar{G}$.
\begin{lemma}
\label{lem:bbr-ccr-complement-sets}
Let $G = (A,B;E)$ be a bipartite graph and $p,q$ be positive integers.
Then, $S$ is a $(p,q)$-biclique of $G$
if and only if 
$S$ is a $\mset{p,q}$-set of $\bar{G}$.
\end{lemma}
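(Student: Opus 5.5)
The plan is to prove both directions directly from the definitions, using two facts. First, $\bar{G}[S]$ is the complement of $G[S]$. Second, $A$ and $B$ are cliques in $\bar{G}$. Throughout, write $S_A = S \cap A$ and $S_B = S \cap B$.

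For the forward direction, I would start from the assumption that $G[S] \cong K_{p,q}$. Because $p,q \ge 1$, the graph $G[S]$ is connected. Hence its bipartition is unique, and it must be $\{S_A, S_B\}$: the two color classes of $K_{p,q}$ are independent sets of a connected bipartite graph, so they coincide with the restriction of $\{A,B\}$. Consequently $\{|S_A|,|S_B|\} = \{p,q\}$ and every pair $a\in S_A$, $b\in S_B$ is adjacent in $G$. In $\bar{G}$ this means there are no edges between $S_A$ and $S_B$. On the other hand, $S_A$ and $S_B$ are cliques in $\bar{G}$, since $A$ and $B$ are independent in $G$. Both sets are nonempty, so $\bar{G}[S]$ has exactly the two components $S_A$ and $S_B$, with orders $p$ and $q$. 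Therefore $\mu_{\bar{G}}(S) = \mset{p,q}$.

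For the converse, I would assume $\bar{G}[S]$ has exactly two components $X$ and $Y$ with $\{|X|,|Y|\}=\{p,q\}$. There are two things to show. The first is that each component lies entirely within one side. Since $S_A$ and $S_B$ are cliques in $\bar{G}$, any nonempty one of them lies inside a single component. Two components exist, so both $S_A$ and $S_B$ must be nonempty, and each lies in a different component. This gives $\{X,Y\}=\{S_A,S_B\}$. The second is that there are no $\bar{G}$-edges between $X$ and $Y$, which holds because they are distinct components. Hence every pair $a\in S_A$, $b\in S_B$ is an edge of $G$. Since $S_A$ and $S_B$ are independent in $G$, we get $G[S]\cong K_{|S_A|,|S_B|}=K_{p,q}$ up to order.

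The main obstacle is the step in the forward direction that identifies the biclique's parts with $S_A$ and $S_B$. It needs the observation that a connected bipartite graph has a unique bipartition, together with $p,q\ge 1$. Everything else is routine complementation.
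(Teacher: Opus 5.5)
Your proof is correct and follows essentially the same route as the paper. The forward direction is by complementation. For the converse, both arguments use that $A$ and $B$ are cliques of $\bar{G}$ to force each of the two components onto one side; the paper phrases this as a component meeting both sides being a dominating set of $\bar{G}$, while you phrase it as each nonempty $S \cap A$, $S \cap B$ lying in a single component.

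The unique-bipartition step you flag as the main obstacle is not actually needed. The two color classes of $G[S] \cong K_{p,q}$ are independent in $G$ and completely joined to each other, so they are directly the two components of $\bar{G}[S]$, regardless of how they sit relative to $A$ and $B$.
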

\begin{proof}
For the only-if direction, observe that 
if $S$ is a $(p,q)$-biclique of $G$ with bipartition $\{X, Y\}$,
 then $\cc_{\bar{G}}(S) = \{X, Y\}$, and thus $\mu_{\bar{G}}(S) = \mset{p,q}$.

To prove the if direction, assume that $\cc_{\bar{G}}(S) = \{X, Y\}$ and $(|X|, |Y|) = (p,q)$.
If $X$ intersects both cliques $A$ and $B$ of $\bar{G}$, then $X$ is a dominating set of $\bar{G}$, which contradicts that $\cc_{\bar{G}}(S) = \{X, Y\}$.
Thus $X$ cannot intersect both $A$ and $B$. By symmetry, we may assume that $X \subseteq A$.
Now, since $A$ is a clique of $\bar{G}$ and $\bar{G}$ has no edge between $X$ and $Y$, it follows that $Y \subseteq B$.
The discussion so far implies that, in the original graph $G$, $X$ and $Y$ are independent sets and there are all possible edges between them;
that is, $X \cup Y$ ($=S$) is a $(p, q)$-biclique of~$G$.
\end{proof}

\begin{lemma}
\label{lem:bbr-ccr-complement-moves}
Let $G = (A,B;E)$ be a bipartite graph and $T$ and $T'$ be $(p,p)$-bicliques of $G$.
There is a TJ-move from $T$ to $T'$ in $G$
if and only if
there is a CS1-move from $T$ to $T'$ in $\bar{G}$.
\end{lemma}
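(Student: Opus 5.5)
The plan is to translate both kinds of moves through \cref{lem:bbr-ccr-complement-sets}. That lemma says that the two connected components of $\bar{G}[T]$ are exactly the two sides of the biclique $T$, and each side lies entirely in $A$ or entirely in $B$. The proof then reduces to bookkeeping on these two sides.

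\emph{Only-if direction.} Let $T' = T - u + v$ with $u \ne v$.

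First I show that $u$ and $v$ lie on the same side of $\{A,B\}$. This needs no connectivity assumption on $G$. Since $K_{p,p}$ is connected, the bipartition of a biclique is forced to agree with $\{A,B\}$. Hence $|T \cap A| = |T \cap B| = p$, and likewise $|T' \cap A| = |T' \cap B| = p$. Comparing these counts shows that $u$ and $v$ are both in $A$ or both in $B$; say both are in $A$.

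Now write $\cc_{\bar{G}}(T) = \{X, Y\}$ with $X = T \cap A$ and $Y = T \cap B$, which is valid by \cref{lem:bbr-ccr-complement-sets}. Then $\cc_{\bar{G}}(T') = \{X', Y\}$ with $X' = X - u + v$. Since $u \ne v$ we have $X' \ne X$. Since $X' \subseteq A$ is nonempty and $Y \subseteq B$, we also have $X' \ne Y$. Therefore exactly one component is replaced, and $|X \setminus X'| = |X' \setminus X| = 1$. Finally, $X \cup X' \subseteq A$, and $A$ is a clique in $\bar{G}$, so $\bar{G}[X \cup X']$ is connected. This gives a CS1-move.

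\emph{If direction.} Suppose $\cc_{\bar{G}}(T) = \{X, Y\}$ and $\cc_{\bar{G}}(T') = \{X', Y\}$ with $X' = X - u + v$ and $u \ne v$. Then
\[
T = X \cup Y \quad\text{and}\quad T' = X' \cup Y .
\]
To conclude that $T' = T - u + v$ is a genuine TJ-move, I check membership of $u$ and $v$. The vertex $u$ lies in $X \setminus X'$, and $u \notin Y$ because $X$ and $Y$ are disjoint, so $u \notin T'$. The vertex $v$ is not in $X$, and $v \notin Y$ because $X'$ and $Y$ are disjoint, so $v \notin T$. Hence $|T \setminus T'| = |T' \setminus T| = 1$. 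Both $T$ and $T'$ are $(p,p)$-bicliques by hypothesis, so this is a TJ-move.

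The only step that needs care is the side-preservation argument in the only-if direction, because the lemma does not assume that $G$ is connected. It is settled by the counting argument above.
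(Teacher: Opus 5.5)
Your proof is correct and follows essentially the same route as the paper: both identify the components of $\bar{G}[T]$ as $T \cap A$ and $T \cap B$, and both use $|X|=|Y|=|X'|=|Y'|=p$ to force a single swap to stay on one side. Both also get connectivity of $\bar{G}[X \cup X']$ for free from $A$ (or $B$) being a clique of $\bar{G}$. The paper packages this as a single chain of equivalences, whereas you split it into two directions and make the counting and disjointness checks explicit; this is a more detailed write-up of the same argument.
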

\begin{proof}
Assume that $T = X \cup Y$ and $T' = X' \cup Y'$ such that $X, X' \subseteq A$ and $Y, Y' \subseteq B$.
Since $T$ and $T'$ are bicliques of $G$, it follows that $\cc_{\bar{G}}(T) = \{X, Y\}$ and $\cc_{\bar{G}}(T') = \{X', Y'\}$.
Since $A$ and $B$ are cliques of $\bar{G}$, both $X \cup X'$ ($\subseteq A$) and $Y \cup Y'$ ($\subseteq B$) are connected in $\bar{G}$.
Since $|X| = |Y| = |X'| = |Y'|$ ($ = p$), 
we can see that $|(X \cup Y) \setminus (X' \cup Y')| = |(X' \cup Y') \setminus (X \cup Y)| = 1$
if and only if all conditions below are satisfied:
\begin{itemize}
  \setlength{\itemsep}{0pt}
  \item either $X \ne X'$ and $Y = Y'$, or, $X = X'$ and $Y \ne Y'$;
  \item if $X \ne X'$, then $|X \setminus X'| = |X' \setminus X| = 1$;
  \item if $Y \ne Y'$, then $|Y \setminus Y'| = |Y' \setminus Y| = 1$.
\end{itemize}
Therefore, 
there is a TJ-move from $T = X \cup Y$ to $T' = X' \cup Y'$ in $G$
if and only if
there is a CS1-move from $T$ to $T'$ in $\bar{G}$.
\end{proof}

\cref{lem:bbr-ccr-complement-sets,lem:bbr-ccr-complement-moves} imply that,
for a bipartite graph $G$,
a sequence $S_{0}, \dots, S_{\ell}$ is a TJ-sequence of $(p,p)$-bicliques in $G$
if and only if 
$S_{0}, \dots, S_{\ell}$ is a CS1-sequence of $\mset{p,p}$-sets in $\bar{G}$.
This proves \cref{thm:ccr-hardness} with CS since CS1 and CS are equivalent~\cite{Nakahata25}.

Finally, we prove \cref{thm:ccr-hardness} with CJ by showing the equivalence of CS and CJ in the current setting.
Let $\bar{G}$ be the complement of a bipartite graph $G = (A,B;E)$,
and $T$ and $T'$ be $\mset{p,p}$-sets in $\bar{G}$ with $X = T \cap A$, $Y = T \cap B$, $X' = T' \cap A$, and $Y' = T' \cap B$.
Assume that there is a CJ-move from $T$ to $T'$; that is, $|\{X,Y\} \setminus \{X',Y'\}| = |\{X',Y'\} \setminus \{X,Y\}| = 1$.
By symmetry, we may assume that $X \in \{X,Y\} \setminus \{X',Y'\}$.
If $Y' \in \{X',Y'\} \setminus \{X,Y\}$, then $\cc_{\bar{G}}(T') = \{Y, Y'\}$, which contradicts $Y, Y' \subseteq B$.
Hence, it follows that $X' \in \{X',Y'\} \setminus \{X,Y\}$.
Since $X, X' \subseteq A$ and $A$ is a clique of $\bar{G}$, they form a connected subgraph $\bar{G}[X \cup X']$ of $\bar{G}$.
This implies that the CJ-move from $T$ to $T'$ is also a CS-move from $T$ to $T'$.


\section{Concluding remarks}

In this paper, we showed that {\BBR} on bipartite graphs is PSPACE-complete,
resolving a previously unsettled problem of Hanaka et al.~\cite{HanakaIMMNSSV20}.
Using this hardness result, we showed that {\CCR} with two connected components is PSPACE-complete under both CJ and CS rules,
answering an open problem of Nakahata~\cite{Nakahata25} in the negative.

\paragraph{Token sliding}
In this paper, we only considered the token-jumping model for the biclique reconfiguration problem.
Indeed, the token-sliding model is not meaningful for {\BBR} on bipartite graphs, 
since any TS-move on a bipartite graph maps a $(p,p)$-biclique to a set with $p-1$ vertices on one side and $p+1$ vertices on the other side.

For general graphs, we can see that the proof of Hanaka et al.~\cite[Theorem~12]{HanakaIMMNSSV20}
implies that {\BBR} under the token-sliding model is PSPACE-complete.
Although their theorem is stated for the token-jumping model, its proof works for the token-sliding model as well.

\paragraph{Flexible variant}
One may also consider the \emph{flexible} version of the biclique reconfiguration problem
that takes as input a graph $G$, a $(p,q)$-biclique $S$, and a $(p',q')$-biclique $S'$, where $(p,q)$ and $(p',q')$ are not necessarily the same,
and asks if there is a TJ-sequence (or a TS-sequence) of (any) bicliques from $S$ to $S'$.
For general graphs, we can use the proof of Hanaka et al.~\cite[Theorem~12]{HanakaIMMNSSV20} again.
Although their proof is designed for the case where the shape of allowed bicliques is fixed, it actually works without such a restriction.

For bipartite graphs, we may show that it is at least NP-hard under TJ via a reduction from {\ISR} under TJ on bipartite graphs~\cite{LokshtanovM19},
because \emph{almost} all bicliques of a bipartite graph $H = (A,B;E)$ are independent sets of its bipartite complement $H'$:
the only exceptions are bicliques $S$ of $H$ that are indeed independent sets $K_{|S|, 0}$ of $H$ that intersect both $A$ and $B$;
such independent-set bicliques of $H$ form bicliques $K_{|S \cap A|, |S \cap B|}$ of $H'$.
In a reduction, we can avoid such exceptional bicliques by adding to both sides of the graph
a large number of vertices adjacent to all vertices in the other side
and including them into both initial and target sets.
For the token-sliding model on bipartite graphs, we did not find a direct connection from the flexible variant to known problems.


\bibliographystyle{plainurl}
\bibliography{ref}

\end{document}